\documentclass[12pt,a4paper]{amsart}

\usepackage[latin1]{inputenc}     % ��KK�SET K�YTT��N
\usepackage{amsmath,amssymb,amsthm}
\usepackage{latexsym}
\usepackage{amsfonts}
\usepackage{bbm,bm,dsfont} % for bold greek symbols, write \bm{\sigma} for example
\usepackage{color} 
\usepackage{graphicx}

\numberwithin{equation}{section} %mit� liene t�m� tarkoittaa?

\theoremstyle{definition}

\newtheorem{definition}{Definition}
\newtheorem{example}{Example}
\newtheorem{remark}{Remark}
\newtheorem{theorem}{Theorem}
\newtheorem{lemma}{Lemma}
\newtheorem{corollary}{Corollary}

%\newcommand{\be}{\begin{align*}} % tee t�ll� kaikki kaavat!!!!
%\newcommand{\en}{\end{align*} }

%\newcommand{\be}{\begin{equation}}
%\newcommand{\ee}{\end{equation}}

% OTA SEURAAVAT SIVUASTUKSET POIS MIK�LI PHYS. REV. JUTTU
\setlength{\textwidth}{17cm}
\addtolength{\textheight}{2.0cm}
\addtolength{\voffset}{-2.4cm}
\addtolength{\hoffset}{-2.0cm}

 % RIVIV�LI 1.5

%\usepackage{showkeys} % VIITTEET N�KYY NYT

%kaunokirjamet aina \h eteen tai \cal komento

 %generic index set I

 %ordo, the set of observables
 %projections

%sf kirjaimet (esim. POM) aina \sf eteen
\newcommand{\sfa}{\mathsf{A}}

\newcommand{\sfe}{\mathsf{E}}
\newcommand{\sff}{\mathsf{F}}

\newcommand{\sfo}{\mathsf{O}}

%numbers eli bb tyyli
\newcommand{\N}{\mathbb N} %natural
\newcommand{\R}{\mathbb R} %real
 %integer
\newcommand{\C}{\mathbb C} %complex
 %rational
 %circle

%vectors, bold facet, joko \b tai \v �rsytt�v��!

 %sigma
 %a
 %b
 %c
 %e
 %f
 %g
 %u
 %r
 %n
 %m
 %k
 %t
 %x
 %y
 %v
 %w

%kreikkalainen meininki
 % hassu fii
\newcommand{\vp}{\varphi} % -''-
 % pieni omega
\newcommand{\Om}{\Omega} % iso omega
 % pieni lambda
 % hassu epsilon
 % epsilon
 % pieni sigma

%Hilbert space
\newcommand{\hi}{\mathcal{H}} %Hilbert space
\newcommand{\hil}{{\mathcal H}} % -''- 
\newcommand{\ki}{\mathcal{K}} %other Hilbert space
\newcommand{\li}{\mathcal{L}} %3rd Hilbert space
 %4th Hilbert space
 %direct integral Hilbert space
 %direct integral Hilbert space
 %projection

%operators 
 %Fourier transform
 %identity operator, MIK� OIS HYV�? PAKSU YKK�NEN?
%null operator
\renewcommand{\O}{\mathrm{Obs}} %observables
 %observables

%operator classes
\newcommand{\lh}{\mathcal{L(H)}} %bounded linear operators on H
 %bounded linear operators on K
\newcommand{\lk}{\mathcal{L(K)}} %bounded linear operators
 %bounded operators H -> K
 %bounded operators K -> H
 %trace class operators on H
 %trace class operators on K
\newcommand{\sh}{\mathcal{S(H)}} %states on H
 %states on K
 %states = trace-1-positive
 %effects
 %unitary operators
 %projections
%1-dim projections 
%finite rank operators
%HilpertSmitti
%self-adjoint operators
%self-adjoint & positive operators
%self-adjoint trace-class operators

%trace, ket, ketbra, innerproduct, norm
\newcommand{\tr}[1]{\mathrm{tr}\left[#1\right]} %trace
\def\<{\langle} %mainiot sulut! <
\def\>{\rangle} %mainiot sulut! >
 %ket
 %bra
 %ketbra
 % sis�tulo (huom! V�lipalkki | j�� pieneksi)
 % normi

%observables
 %the set of covariant observables
 %canonical observable
 %canonical observable
\newcommand{\Ao}{\mathsf{A}} %generic observable = \sfa
 %generic observable = \sfe
 %generic observable = \sff
 %generic observable = \sfq
\newcommand{\Po}{\mathsf{P}} %generic observable = \sfp
\newcommand{\Mo}{\mathsf{M}} %generic observable = \sfm
\newcommand{\No}{\mathsf{N}} %generic observable = \sfm

%other

 %joukko, esim. {x | x+1=3}
 %yleinen Borel sigma-algebra
 %reaalisuoran Borel sigma-algebra
 %tason Borel sigma-algebra
 %linear combinations
 %extremals
 %variance
\def\d{{\mathrm d}} %differentiaali
 %osittaisdifferentiaali, doo
\newcommand{\ov}{\overline} %ylleviivaus
 %alleviivaus
%\newcommand{\CHI}[1]{\ensuremath{ \chi\raisebox{-1ex}{$\scriptstyle #1$} }} % karakteristinen funktio, esim. \CHI X

%%%T�SS� VIEL� JONKUN TUNTEMATTOMAN HENKIL�MME M��RITELMI�:

%%%%%FONTIT

\newcommand{\mr}[1]{\mathrm{#1}}
\newcommand{\mc}[1]{\mathcal{#1}}

\newcommand{\ms}[1]{\mathsf{#1}}

%%%%%MUITA

%\newcommand{\tr}[1]{\mr{tr}[#1]}

%\newcommand{\la}{\langle}
%\newcommand{\ra}{\rangle}

%%%%%KREIKKALAISIA AAKKOSIA

\newcommand{\f}{\varphi}
%\newcommand{\eps}{\varepsilon}

%\newcommand{\Om}{\Omega}
%\newcommand{\om}{\omega}

%VARIT:

%%%%%%%%%%%%%%%%%%%%%%%%%%%%%%%%%%%%%%%%%%%%%%%%%%%%%%%%%%%

\begin{document}

\title{Compatibility properties of extreme quantum observables}

\author{Erkka Haapasalo}
\email{ethaap@utu.fi}
\address{Turku Centre for Quantum Physics, Department of Physics and Astronomy, University of Turku, FI-20014 Turku, Finland}

\author{Juha-Pekka Pellonp\"a\"a}
\email{juhpello@utu.fi}
\address{Turku Centre for Quantum Physics, Department of Physics and Astronomy, University of Turku, FI-20014 Turku, Finland}

\author{Roope Uola}
\email{roope.uola@gmail.com}
\address{Naturwissenschaftlich-Technische Fakult\"at,
Universit\"at Siegen,
Walter-Flex-Str.~3,
57068 Siegen, Germany}
\address{Turku Centre for Quantum Physics, Department of Physics and Astronomy, University of Turku, FI-20014 Turku, Finland}

\begin{abstract}
Recently a problem concerning the equivalence of joint measurability and coexistence of quantum observables was solved \cite{lyhyt}. In this paper we generalize two known joint measurability results from sharp observables to the class of extreme observables and study relationships between coexistence, joint measurability, and post-processing of quantum observables when an extreme observable is involved. We also discuss another notion of compatibility and provide a counterexample separating this from the former notions.
\newline
\noindent
{\bf Keywords}: positive-operator-valued measure, joint measurability, coexistence, extremality
\newline
\noindent
{\bf Mathematics Subject Classification (2010)}: 81P45, 81Q99, 46N10, 46N50
\end{abstract}
\maketitle

%%%%%%%%%%%%%%%%%%%%%%%%%%%%%%%%%%%%%%%%%%%%%%%%%%%%%%%%%%%

\section{Introduction}

The problem of simultaneous measurement of observables lies in the heart of quantum mechanics. Two basic notions of simultaneous measurability are coexistence and joint
measurability. These notions have been studied widely \cite{Lahti, LaPu, Ludwig, Pell2} and recently Reeb {\it et al}.\ demonstrated that the notion of coexistence is more general than joint measurability \cite{lyhyt}.

The intuitive idea behind these definitions is quite simple. In the joint measurability case one is asked if there exists an observable giving two fixed observables ($\ms A$ and $\ms B$) as its marginals. In the coexistence case the marginal requirement is dropped -- it suffices that there exist an observable whose range contains the ranges of $\ms A$ and $\ms B$. An equivalent way to formulate coexistence is to require the existence of an observable which gives each pair of binarizations of $\ms A$ and $\ms B$ as post-processings. This formulation connects coexistence to joint measurability and thus one can see joint measurability as the main compatibility property of observables. A bit less restrictive (and maybe a bit more physical) question would be whether for each pair of binarizations there exists a joint observable. We call this property {\it joint measurability of binarizations}. Physically this means that in the coexistence case all the binarizations are jointly measurable with one measurement device and varying post-processings whereas in the other case one might need a different measurement device for each pair of binarizations.

In the coexistence case one has an observable $\ms M$ whose measurement statistics include the statistics of $\ms A$ and $\ms B$, although in this case one may not have a way to connect the statistics of the observable $\ms M$ to the statistics of $\ms A$ or $\ms B$. In this paper we show that if one of the concerned observables ($\ms A$, $\ms B$ or the mother observable $\ms M$) is extreme the connection between the statistics can be resolved. Extremality of an observable has already been connected to compatibility properties of quantum observables \cite{HaHePe13} and this work is a continuation of this line of research.

It is clear that the following implications hold:
\begin{align*}
\text{Joint measurability }\Longrightarrow\text{Coexistence }\Longrightarrow\text{Joint measurability of binarizations.}
\end{align*}
It was shown in \cite{lyhyt} that the first implication cannot be reversed and in \cite{2_coex} that joint measurability of binarizations does not imply joint measurability. In this paper we complete the picture by showing that the last implication is also irreversible.

The mathematical framework and basic definitions are included in the second section. The third, fourth and fifth sections deal with our new results and conclusions are found in the sixth section.

\section{Notations and definitions}

For any Hilbert space $\hi$ we let $\lh$ denote the set of bounded operators on $\hi$. The set of states (positive operators of trace one) is denoted by $\sh$ and $I_\hi$ (or briefly $I$) is the identity operator of $\hi$. A positive operator $E\in\lh$ is said to be an effect if $E\le I_\hi$. Throughout this article, we let $\hi$ be a separable complex nontrivial Hilbert space. Moreover, we say that $(\Omega,\Sigma)$ is a measurable space when $\Om$ is a non-empty set and $\Sigma$ is a $\sigma$-algebra of subsets of $\Omega$. A measurable space $(\Om,\Sigma)$ is said to be standard Borel, if it is $\sigma$-isomorphic to a Borel-measurable subset $B$ of a Polish space $E$ equipped with the restriction $\mc B(B)=\{B\cap C\,|\,C\in\mc B(E)\}$ of the Borel $\sigma$-algebra $\mc B(E)$ of $E$. The total set $\Om$ of a standard Borel space can only be finite, countably infinite, or of the cardinality of the continuum $\R$. In the first two cases, the standard Borel space is isomorphic to some $Z\subseteq\N:=\{1,2,3,\ldots\}$ (equipped with the discrete $\sigma$-algebra) and, in the last case, the standard Borel space is isomorphic to $\big(\R,\mc B(\R)\big)$. All physically relevant value spaces fall in the category of standard Borel
spaces. For instance, any finite-dimensional second-countable manifold which is Hausdorff is standard Borel when equipped with its Borel structure.

For a measurable space $(\Om,\Sigma)$ let $\O(\Sigma,\,\hi)$ be the convex set of quantum observables, that is, normalized positive operator valued measures (POVMs)
$\ms A:\,\Sigma\to\lh$. Recall that a map $\ms A:\,\Sigma\to\lh$ is a POVM if and only if $X\mapsto p_\rho^\Ao(X):=\tr{\rho\ms A(X)}$ is a probability measure for all $\rho\in\sh$. Moreover, $p_\rho^\Ao(X)$ is interpreted as the probability of getting an outcome $x$ which belong to $X\in\Sigma$ when a measurement of $\ms A\in\O(\Sigma,\,\hi)$ is performed and the system is in the state $\rho\in\sh$. Any $\ms A\in\O(\Sigma,\,\hi)$ is called a projection valued measure (PVM) or a sharp observable if $\ms A(X)^2=\ms A(X)$ for all $X\in\Sigma$.

In the rest of this section, we let $(\Omega,\Sigma)$ and $(\Omega',\Sigma')$ be measurable spaces and $\ms A\in\O(\Sigma,\hi)$ and $\ms B\in\O(\Sigma',\hi)$.

\begin{definition}[Binarization]
Any effect $E\in\lh$ defines a two-valued (binary) observable $\sfo^E:\,\big\{\emptyset,\,\{+1\},\,\{-1\},\,\{+1,-1\}\big\}\to\lh$ via $\sfo^E(\{+1\}):=E$ (so that automatically $\sfo^E(\{-1\})=I_\hi-E$). For each $X\in\Sigma$ we can define a {\it binarization} of  $\ms A$ (associated to $X$) as a binary observable  $\sfo^{\ms A(X)}$. 
\end{definition}

\begin{definition}[Commutativity of POVMs]
If $[\ms A(X),\ms B(Y)]=0$ for all $X\in\Sigma$, $Y\in\Sigma'$, then $\ms A$ and $\ms B$ are said to {\it commute} (with each other).
\end{definition}

\begin{definition}[Joint measurability]
Denote the product $\sigma$-algebra of $\Sigma$ and $\Sigma'$ by $\Sigma\otimes\Sigma'$. We say that $\ms A$ and $\ms B$ are {\it jointly measurable} if there exists an $\No\in\O(\Sigma\otimes\Sigma',\hi)$ such that $\No(X\times\Omega')=\ms A(X)$ for all $X\in\Sigma$ and $\No(\Omega\times Y)=\ms B(Y)$ for all $Y\in\Sigma'$. In this case, $\No$ is called a {\it joint observable} of $\ms A$ and $\ms B$. Recall that if $\ms A$ and $\ms B$ commute then $\ms A$ and $\ms B$ are jointly measurable.\footnote{Actually, the value spaces $(\Om,\Sigma)$ and $(\Om',\Sigma')$ must satisfy certain mathematical requirements, e.g.\ they are standard Borel spaces \cite{ylinen96}.} 
\end{definition}

\begin{definition}[Coexistence]\label{17.3.2014 14:37}
We say that $\ms A$ and $\ms B$ are {\it coexistent} if there exists a $\sigma$-algebra $\ov\Sigma$ over a set $\ov\Om$ and an observable $\Mo:\,\ov\Sigma\to\lh$ such that the ranges\footnote{Recall that the range of $\ms A$ is the set ${\rm ran}\,\ms A:=\{\ms A(X)\in\lh\,|\,X\in\Sigma\}$.} of $\ms A$ and $\ms B$ are contained in the range of $\Mo$ (i.e.,\ ${\rm ran}\,\ms A\cup{\rm ran}\,\ms B\subseteq{\rm ran}\,\Mo$). In other words, for any $X\in\Sigma$ and $Y\in\Sigma'$ there exist sets $Z,\,Z'\in\ov\Sigma$ such that $\Mo(Z)=\ms A(X)$ and $\Mo(Z')=\ms B(Y)$. In this case, we say that $\Mo$ is a {\it mother observable} of $\ms A$ and $\ms B$.
\end{definition}

\begin{definition}[Smearing]\label{def:smearing}
Let $(\ov\Om,\ov\Sigma)$ be a measurable space and $\mu$ be a $\sigma$-finite positive measure on $\ov\Sigma$. We say that $\beta:\,\ov\Omega\times\Sigma\to\R$ is a {\it weak Markov kernel (with respect to $\mu$)} if
\begin{itemize}
\item[(i)] $\ov\Omega\ni z\mapsto\beta(z,X)\in\R$ is $\mu$-measurable for all $X\in\Sigma$,
\item[(ii)] for every $X\in\Sigma$, $0\le\beta(z,X)\le 1$ for $\mu$-almost all $z\in\ov\Omega$,
\item[(iii)] $\beta(z,\Omega)=1$ and $\beta(z,\emptyset)=0$ for $\mu$-almost all $z\in\ov\Omega$,
\item[(iv)] if $\{X_i\}_{i=1}^\infty\subseteq\Sigma$ is a disjoint sequence (i.e.\ $X_i\cap X_j=\emptyset$, $i\ne j$) then $\beta(z,\cup_i X_i)=\sum_i\beta(z,X_i)$ for $\mu$-almost all $z\in\ov\Omega$.
\end{itemize}
Moreover, $\Mo\in\O(\ov\Sigma,\hil)$ is {\it absolutely continuous with respect to $\mu$} if $\mu(Z)=0$ implies $\Mo(Z)=0$ for all $Z\in\ov\Sigma$. If there exists a $\sigma$-finite measure $\mu:\,\ov\Sigma\to[0,\infty]$ such that $\Mo$ is absolutely continuous with respect to it and a weak Markov kernel $\beta:\,\ov\Omega\times\Sigma\to\R$ with respect to $\mu$ such that $\ms A(X)=\int_{\ov\Omega}\beta(z,X)\d\Mo(z)$ for all $X\in\Sigma$ then $\ms A$ is a {\it smearing} or a {\it post-processing} of $\Mo$. A special case of a weak Markov kernel is a {\it Markov kernel} which is a map $\beta:\ov\Omega\times\Sigma\to\R$ such that $\beta(z,\cdot):\Sigma\to\R$ is a probability measure for ($\mu$-almost) all $z\in\ov\Omega$. Note that if $(\Omega,\Sigma)$ is a standard Borel space and $\Ao$ is a smearing of $\Mo$ then we may always replace the weak Markov kernel by a Markov kernel \cite[Theorem 6.3 and the subsequent discussion]{JePuVi}.

Let $\ms A\in\O(\Sigma,\hil)$ (resp. $\ms B\in\O(\Sigma',\hil)$) be a smearing of $\ms M\in\O(\ov\Sigma,\hil)$ by means of a (weak) Markov kernel $\beta:\ov\Om\times\Sigma\to\R$ (resp. $\gamma:\ov\Om\times\Sigma'\to\R$), where $(\Om,\Sigma)$ and $(\Om',\Sigma')$ are standard Borel. One may easily check that the function
$$
\Sigma\times\Sigma'\ni(X,Y)\mapsto\int_{\ov\Om}\beta(z,X)\gamma(z,Y)\d\ms M(z)
$$
is a positive-operator bimeasure which we may extend into a joint observable $\ms N\in\O(\Sigma\otimes\Sigma',\hil)$ for $\ms A$ and $\ms B$ \cite{LaYl, ylinen96}. Hence, $\ms A$ and $\ms B$ are jointly measurable.

When $\ms A\in\O(\Sigma,\hil)$ is obtained as a post-processing of $\ms M\in\O(\ov\Sigma,\hil)$ by means of a Markov kernel $\beta:\ov\Omega\times\Sigma\to\R$ of the form $\beta(z,X)=\chi_{f^{-1}(X)}(z)$, $z\in\ov\Om$, $X\in\Sigma$, where $f:\ov\Om\to\Om$ is a $(\ov\Sigma,\Sigma)$-measurable function, we say that $\ms A$ is a {\it relabeling} of $\ms M$; here $\chi_S$ stands for the characteristic or indicator function of a set $S$. Now $\ms A(X)=\int_{\ov\Om}\beta(z,X)\d\Mo(z)=\Mo(f^{-1}(X))$, $X\in\Sigma$.
\end{definition}

\begin{definition}[Discrete observable]

We say that $\ms A$ is {\it discrete} if there exists a finite or countably infinite set $\Xi\subseteq\Omega$ so that $\ms A$ is absolutely continuous with respect to the measure $\sum_{x\in\Xi}\delta_x$ (where $\delta_x$ is the Dirac measure concentrated on the point $x$). This implies that one can identify $\ms A$ with the sequence $(\ms A_i)_{i=1}^N$ of effects, where $\ms A_i:=\ms A(X_i)\ne 0$, $N\in\N\cup\{\infty\}$, and  $\{X_i\}_{i=1}^N\subseteq\Sigma$ is some disjoint collection of sets. In practice, we may restrict $\ms A$ to the sub-$\sigma$-algebra generated by $\{X_i\}_{i=1}^N$.
\end{definition}

\begin{remark}\label{remarkinpoikanen}
It is easy to see that the joint measurability of $\ms A$ and $\ms B$ implies their coexistence but the converse does not hold in general \cite[Proposition 1]{lyhyt}. However, two coexistent binary observables are jointly measurable \cite{Lahti}. Indeed this is true also for $n\in\N$ binary observables $(\sfo^{E_i})_{i=1}^n$ (where $E_i$'s are effects). To prove this assume that the observables $\sfo^{E_i}$ are coexistent in the sense that there exist (a $\sigma$-algebra $\ov\Sigma$ over a set $\ov\Omega$ and) an observable $\Mo\in\O(\ov\Sigma,\hi)$ and sets $Z_i\in\ov\Sigma$ such that $E_i=\Mo(Z_i)$ for all $i=1,\ldots, n.$ Then one can define a joint observable $\ms N$ (defined on the power set of the $n$-fold cartesian product $\{+1,-1\}\times\cdots\times\{+1,-1\}$) for the observables $(\sfo^{E_i})_{i=1}^n$ by
\begin{align*}
\ms N\big(\{(x_1,...,x_n)\}\big)=\Mo(Z_1^{x_1}\cap Z_2^{x_2}\cap...\cap Z_n^{x_n}),
\end{align*}
where $x_i\in\{-1,1\}$ and $Z_i^+:=Z_i$, $Z_i^-:=\ov\Om \setminus Z_i$. Note that, in general, defining a joint observable from the mother observable as above is not possible. But for two valued observables it is easy to check that the map $\ms N$ above is a joint observable for observables $\sfo^{E_i}$, that is,
$$
\sfo^{E_i}(\{\pm1\})=\Mo(Z_i^\pm)=\sum_{j\ne i\atop x_j=\pm1}\ms N\big(\{(x_1,...,x_{i-1},\pm1,x_{i+1},...,x_n)\}\big).
$$
We will see in sections \ref{sec_ext} and \ref{sec_mot} that a similar construction works for extreme observables.

If $\ms A$ and $\ms B$ are coexistent (with a mother $\Mo$) then their binarizations $\sfo^{\ms A(X)}$ and $\sfo^{\ms B(Y)}$ are (coexistent and) jointly measurable for all $X\in\Sigma$ and $Y\in\Sigma'$. In this case, one can choose the same mother observable $\Mo$ for all pairs $\big(\sfo^{\ms A(X)},\,\sfo^{\ms B(Y)}\big)$. Moreover, $\sfo^{\ms A(X)}\big(\{+1\}\big)=\ms A(X)=\Mo(Z)=\int_{\ov\Om}\beta(z,\{+1\})\d\Mo(z)$ for some $Z\in\ov\Sigma$, where $\beta(z,\{+1\})=\chi_Z(z)$. Clearly,
$\beta(z,\{+1\})$ extends to a Markov kernel $\beta(z,\{\pm1\})=\chi_{f^{-1}(\{ \pm1\})}(z)$, where $f:\ov\Om\to\{+1,-1\}$ is a measurable function such that
$f^{-1}(\{+1\})=Z$ and $f^{-1}(\{-1\})=\ov\Om\setminus Z$, and thus $\sfo^{\ms A(X)}$ is a smearing of $\ms M$. Similarly, any $\sfo^{\ms B(Y)}$ is a smearing of $\Mo$.
\end{remark}

\section{Joint measurability of binarizations $\not\Rightarrow$ coexistence}

From Remark \ref{remarkinpoikanen} it is clear that if two observables are coexistent then there exists an observable from which one can post-process all the binarizations of these observables. This means that if the observables $\ms A$ and $\ms B$ are coexistent then there exists an observable $\Mo$ such that for every sets $X$ and $Y$ the binarizations $\sfo^{\ms A(X)}$ and $\sfo^{\ms B(Y)}$ are jointly measurable in the sense that they can be post-processed from the observable $\Mo$. More generally, one could require that for every sets $X$ and $Y$ there exists an observable $\Mo_{X,Y}$ such that the binarizations $\sfo^{\ms A(X)}$ and $\sfo^{\ms B(Y)}$ can be post-processed from $\Mo_{X,Y}$. The following theorem shows that the latter property, which we call {\it joint measurability of binarizations}, is more general than coexistence.

\begin{theorem}\label{25.2.2015 13:26}
Joint measurability of binarizations does not imply coexistence.
\end{theorem}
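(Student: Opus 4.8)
The plan is to construct an explicit pair $(\ms A,\ms B)$ whose binarizations are pairwise jointly measurable but which admits no common mother observable. The lever is the construction of Remark~\ref{remarkinpoikanen}: were $\ms A$ and $\ms B$ coexistent with a mother $\Mo$, then for any sets $X_1,\dots,X_k\in\Sigma$ and $Y_1,\dots,Y_m\in\Sigma'$ one could write $\ms A(X_i)=\Mo(Z_i)$ and $\ms B(Y_j)=\Mo(W_j)$ and post-process all of them from the single $\Mo$, so that the whole family $\sfo^{\ms A(X_1)},\dots,\sfo^{\ms A(X_k)},\sfo^{\ms B(Y_1)},\dots,\sfo^{\ms B(Y_m)}$ would be \emph{simultaneously} jointly measurable. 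Joint measurability of binarizations, on the other hand, only guarantees that each individual cross pair $\big(\sfo^{\ms A(X)},\sfo^{\ms B(Y)}\big)$ is jointly measurable. It therefore suffices to build $\ms A,\ms B$ such that every cross pair is jointly measurable while some finite subfamily of binarizations fails to be, and I would seek this obstruction already among triples, exploiting the fact that quantum effects can be pairwise but not triple-wise jointly measurable.

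Concretely, I would take $\hi=\C^2$, $\ms A=(A_1,A_2,A_3)$ a three-outcome observable, and $\ms B=\sfo^F$ binary. (Some observable must have at least three outcomes: for two binary observables, joint measurability of binarizations reduces to the joint measurability of the two defining effects, which by Remark~\ref{remarkinpoikanen} is the same as coexistence, so no separation is possible there.) For this choice the hypothesis collapses to three conditions. Since $\sfo^{I-F}$ is a relabeling of $\sfo^F$, and since $A_i+A_j=I-A_k$ makes the binarization of every nontrivial effect of $\mathrm{ran}\,\ms A$ a relabeling of some $\sfo^{A_k}$, joint measurability of binarizations holds if and only if each of $\sfo^{A_1},\sfo^{A_2},\sfo^{A_3}$ is jointly measurable with $\sfo^F$. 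Conversely $\sfo^{A_1}$ and $\sfo^{A_2}$ are automatically jointly measurable, with joint observable $N_{++}=0$, $N_{+-}=A_1$, $N_{-+}=A_2$, $N_{--}=A_3$, so once the three reduction conditions hold the family $\{\sfo^{A_1},\sfo^{A_2},\sfo^F\}$ is pairwise jointly measurable.

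It then remains to arrange that $\{\sfo^{A_1},\sfo^{A_2},\sfo^F\}$ is \emph{not} jointly measurable as a triple; by the first paragraph this single failure rules out coexistence through Remark~\ref{remarkinpoikanen}. I would place the Bloch vectors of $A_1,A_2,F$ along three mutually orthogonal axes and tune their lengths and biases so that the configuration lies above the triple-wise threshold but below every pairwise threshold, exactly as in the orthogonal-trine example of unsharp spin observables. The main obstacle, and the genuine content of the proof, is quantitative: the positivity and normalization $A_1,A_2,A_3\ge0$, $A_1+A_2+A_3=I$ forced by asking $A_1,A_2$ to belong to a single three-outcome POVM preclude taking $A_1,A_2$ unbiased, so the clean unbiased criterion $|\vec x_1+\vec x_2|+|\vec x_1-\vec x_2|\le2$ is unavailable and I must instead invoke the general joint-measurability criterion for two biased qubit effects to verify the three pairwise conditions, together with a direct argument showing that no eight-outcome observable yields $\sfo^{A_1},\sfo^{A_2},\sfo^F$ as its marginals. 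I expect these checks to reduce to a few inequalities in the bias and length parameters, and I expect a nonempty window of admissible parameters to exist by continuity away from the unbiased orthogonal-trine case.
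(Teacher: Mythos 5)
Your logical skeleton is sound and in fact parallels the paper's: a qubit example with a three-outcome $\ms A$ and a binary $\ms B=\sfo^F$; joint measurability of binarizations reduces, via relabelings, to the three pairwise conditions on $(\sfo^{A_i},\sfo^F)$; and non-coexistence follows once the triple $\{A_1,A_2,F\}$ admits no common mother observable, since by Remark~\ref{remarkinpoikanen} coexistence would make all binarizations simultaneously post-processable from a single $\Mo$. The gap is that you never actually produce the example, and the existence argument you offer for the parameter window does not work. The unbiased orthogonal trine is not in the closure of the admissible set: if $A_1,A_2$ are unbiased then $A_3=I-A_1-A_2$ is traceless and nonzero, hence has a negative eigenvalue of magnitude $\tfrac12|\vec x_1+\vec x_2|$, bounded away from zero for Bloch vectors of the lengths you need; so you cannot perturb from that configuration while keeping $(A_1,A_2,A_3)$ a POVM, and ``continuity away from the unbiased case'' establishes nothing. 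Moreover, the tool you propose for refuting the triple --- a general criterion for triple-wise joint measurability of three biased qubit effects, or a direct exclusion of all eight-outcome joint observables --- is left entirely unspecified and is substantially harder than you suggest; this is precisely the quantitative content you acknowledge as ``the genuine content of the proof'' and then omit.

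The paper closes exactly this gap with a structural trick you are missing: take all three relevant effects to be scalar multiples of rank-one projections, $E_1=\tfrac47\kb{\varphi_1}{\varphi_1}$, $E_2=\tfrac47\kb{\varphi_2}{\varphi_2}$, $F_1=\tfrac47\kb{\psi}{\psi}$ with $\psi=\tfrac{1}{\sqrt2}(\varphi_1-\varphi_2)$. A positive operator dominated by two non-proportional rank-one effects must vanish, so any common mother $\Mo$ with $E_1=\Mo(X)$, $E_2=\Mo(Y)$, $F_1=\Mo(Z)$ vanishes on all pairwise intersections and hence satisfies $E_1+E_2+F_1=\Mo(X\cup Y\cup Z)\le I$; the coefficient $\tfrac47$ makes the largest eigenvalue of this sum equal to $\tfrac87>1$, while each $E_i+F_1\le I$ still holds, and the latter already furnishes a disjointly supported mother for each cross pair, hence pairwise joint measurability of the binarizations. (Note also that in that example $E_3=\tfrac37 I$ is trivial, so only two of the three pairwise conditions carry content.) To repair your write-up, replace the orthogonal-axes ansatz and the continuity appeal by this rank-one argument, or else exhibit explicit biased effects and verify both the pairwise criteria and the failure of the triple by hand.
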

\begin{proof}
Consider the case $\mathcal H =\mathbb C ^2$. Let  $\varphi_1=(1,0)$, $\varphi_2=(0,1)$, and $\psi=\frac{1}{\sqrt 2}(\varphi_1-\varphi_2)$. The effects $E_1=\frac{4}{7}|\varphi_1\rangle\langle\varphi_1|$, $E_2=\frac{4}{7}|\varphi_2\rangle\langle\varphi_2|$ and $E_3=I-E_1-E_2$ define a 3-valued observable $\sfe$ and the effects $F_1=\frac{4}{7}|\psi\rangle\langle\psi|$ and $F_2=I-F_1$ constitute a 2-valued observable $\sff$.

If the observables $\sfe$ and $\sff$ are coexistent there exist a (mother) observable $\Mo:\,\ov\Sigma\to\mathcal L(\mathbb C^2)$ such that ${\rm ran}\,\sfe\cup{\rm ran}\,\sff\subseteq{\rm ran}\,\Mo$. This implies that there exist sets $X,\,Y,\,Z\in\ov\Sigma$ such that $E_1=\Mo(X),$ $E_2=\Mo(Y)$ and $F_1=\Mo(Z)$. Since the effects $E_1,\, E_2$ and $F_1$ are rank-1 and for example $\Mo(X\cap Y)\leq E_1,\,E_2$ one must have $\Mo(X\cap Y)=0$. Because $\Mo$ gives zero for all the pairwise intersections of the sets $X,$ $Y,$ and $Z$, and $\Mo(X\cap Y\cap Z)\leq\Mo(X\cap Y)=0$ we get
\begin{align}\label{2.9.2013 15:55}
\Mo(X\cup Y\cup Z)=E_1+E_2+F_1\leq I.
\end{align}
This does not hold since the greatest eigenvalue of $E_1+E_2+F_1$ is $\frac{8}{7}$. Hence, the observables $\sfe$ and $\sff$ are not coexistent.

The binarizations of observables $\sfe$ and $\sff$ are jointly measurable if all the binarizations of $\sfe$ are coexistent with $\sff$ (since for binary observables coexistence is equivalent to joint measurability). Sufficient for this are the following three conditions:
\begin{align}\label{2.9.2013 16:25}
E_1+F_1&\leq I\nonumber\\
E_2+F_1&\leq I\\
E_3+F_1&\leq I\nonumber.
\end{align}
For example, we have a binarization $\sfo^{E_1}$ whose range is $\{E_1, I-E_1\}$. If $E_1+F_1\leq I$ holds one can define a mother observable $\Mo:\,\ov\Sigma\to\mathcal L(\mathbb C^2)$ by $\Mo(X)=E_1$, $\Mo(Y)=F_1$ for some disjoint $X,Y\in\ov\Sigma$.

By calculating the eigenvalues of the above operators $E_i+F_1$ one can see that the conditions $E_i+F_1\leq I$ are valid.
\end{proof}

\begin{remark}
Recently \cite{RU, MQ} an equivalence between joint measurability and the impossibility of quantum steering was shown. Quantum steering is a bipartite entanglement verification method where one party (Alice) tries to convince the other party (Bob) that their shared state is entangled by making only local measurements on her system. For this remark it is enough to formulate steering for discrete observables: Let Alice and Bob share a quantum state $\rho_{AB}\in\mathcal S(\hil\otimes\hil)$ and let the measurements performed by Alice be labelled by $\{\ms A_k\}_{k=1}^n$. When Alice makes a (L\"uders) measurement of $\ms A_k$ and records an outcome $x$, the post-measurement state of the bipartite system will be
\begin{align}
\sigma_{x|k}:=\text{tr}_A[(\ms A_k(x)\otimes I)\rho_{AB}].
\end{align}
We say that Alice can steer Bob, if Bob can not reproduce the conditional states $\sigma_{x|k}$ from some local ensemble of positive operators $\{\rho_\lambda\}_\lambda\subset\mathcal S(\hil)$ by the means of classical post-processing.

It is easy to see that for separable states, i.e. states of the form $\rho_{AB}=\sum_i \mu_i\rho_i^A\otimes\rho_i^B$ with non-negative coefficients $\mu_i$, $\sum_i\mu_i=1$, and $\rho_i^A,\rho_i^B\in\mathcal S(\hil)\ \forall i$, steering is not possible as one can choose the local ensemble to be $\{\mu_i\rho_i^B\}_i$ and for given parameters $i,k,x$ the Markov kernels (stochastic matrices in the discrete case) to be $p(x|k,i):=\text{tr}[\ms A_k(x)\rho_i^A]$.

Formally this means that Alice can steer Bob if and only if there is no ensemble of positive operators $\{\rho_\lambda\}$ together with a set of suitable stochastic matrices $p(x|k,\lambda)$ such that
\begin{align}
\sigma_{x|k}=\sum_\lambda p(x|k,\lambda)\rho_\lambda.
\end{align}
The results of \cite{RU, MQ} state that if Alice uses jointly measurable observables she can never steer Bob no matter what the shared state is, and, moreover, that with non-jointly measurable observables there always exists a state with which steering is present.

Theorem \ref{25.2.2015 13:26} points out an interesting detail about quantum steering: consider any of the binarizations of the observable $\sfe$ used in the proof of theorem \ref{25.2.2015 13:26}. If Alice uses one of these binarizations together with the observable $\sff$ given in the same proof, steering will never be possible, i.e. the scenario is classical in this sense. However, if Alice measures the observable $\sfe$ instead of its binarization, she will be able to reach the quantum regime, i.e. she will be able to steer Bob. In other words, Theorem \ref{25.2.2015 13:26} shows that there are occasions where the quantum phenomenon of steering will become accessible when Alice makes more detailed measurements on her system.

\end{remark}

\section{Coexistence with an extreme observable}\label{sec_ext}

We denote a minimal Na\u{\i}mark dilation of an observable $\ms A\in\O(\Sigma,\hi)$ by a triple $(\mc K,\ms P,J)$, where $\mc K$ is a dilation (Hilbert) space, $J:\hi\to\mc K$ is an isometry and $\Po\in\O(\Sigma,\mc K)$ is a sharp observable. The dilation is defined by the formula
$$
\ms A(X)=J^*\Po(X) J,\qquad X\in\Sigma,
$$
and the minimality means that the linear hull of the vectors of the form $\ms P(X)J\f$, $X\in\Sigma$, $\f\in\hil$, is a dense subspace of $\mc K$. Note that if $(\Om,\Sigma)$ is standard Borel then $\ki$ is separable.

\begin{remark}
Let $(\mc K,\ms P,J)$ be a minimal Na\u{\i}mark dilation for $\ms A\in\O(\Sigma,\hi)$. $\Mo$ is an {\it extreme}\footnote{An observable $\ms A\in\O(\Sigma,\hil)$ is an extreme point of the convex set $\O(\Sigma,\hil)$ if for any $t\in(0,1)$ and $\ms A_1,\,\ms A_2\in\O(\Sigma,\hil)$ the condition $\ms A=t\ms A_1+(1-t)\ms A_2$ yields $\ms A_1=\ms A_2=\ms A$. The convex combination of observables is defined by $\big(t\ms A_1+(1-t)\ms A_2\big)(X):=t\ms A_1(X)+(1-t)\ms A_2(X)$ for all $X\in\Sigma$.} point of $\O(\Sigma,\hi)$ if and only if, for all $D\in\mathcal{L}(\mc K)$ such that $[D,\ms P(X)]=0$, $X\in\Sigma$, the condition $J^*DJ=0$ implies $D=0$ \cite{Arveson69, VIITATKAA, Part1}. Recall that sharp observables are extreme, but there are other extreme observables as well \cite{VIITATKAA}.
\end{remark}

\begin{lemma}\label{lemma1}
Let $A:\,\hi\to\ki$ and $B\in\lh$ be bounded operators. Then $0\le B\le A^*A$ if and only if there exists a $C\in\lk$, $0\le C\le I_{\ki}$, such that $B=A^*CA$. In addition, $C$ is unique if and only if the range ${\rm ran}\,A$ of $A$ is dense in the Hilbert space $\ki$.
\end{lemma}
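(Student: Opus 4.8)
The plan is to split the statement into the equivalence and the uniqueness claim, with the substance of the first lying in a Douglas-type factorization. The easy implication is immediate: if $B=A^*CA$ with $0\le C\le I_\ki$, then $B=(C^{1/2}A)^*(C^{1/2}A)\ge 0$, and $C\le I_\ki$ gives $A^*CA\le A^*A$, so $0\le B\le A^*A$. All the work is in the converse.

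For the converse, I would start from $0\le B\le A^*A$ and record the pointwise norm inequality that drives everything: for every $\f\in\hi$,
\[
\no{B^{1/2}\f}^2=\<B\f,\f\>\le\<A^*A\f,\f\>=\no{A\f}^2 .
\]
This lets me define a map $D_0$ on ${\rm ran}\,A$ by $D_0(A\f):=B^{1/2}\f$. The same inequality shows simultaneously that $D_0$ is well defined (if $A\f=A\f'$ then $\no{B^{1/2}(\f-\f')}\le\no{A(\f-\f')}=0$) and that it is a contraction. Hence $D_0$ extends by continuity to $\ov{{\rm ran}\,A}$, and then to all of $\ki$ by setting it equal to zero on $(\ov{{\rm ran}\,A})^\perp$, yielding a contraction $D\in\lkh$ with $\no{D}\le 1$ and $DA=B^{1/2}$. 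Setting $C:=D^*D$ then gives $0\le C\le I_\ki$ (from $\no{D}\le 1$) and $A^*CA=(DA)^*(DA)=B^{1/2}B^{1/2}=B$, which settles existence.

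For uniqueness, suppose first that ${\rm ran}\,A$ is dense. If $C_1,C_2$ are both solutions then $A^*(C_1-C_2)A=0$, so $\<(C_1-C_2)A\f,A\psi\>=0$ for all $\f,\psi\in\hi$; density and continuity upgrade this to $\<(C_1-C_2)x,y\>=0$ for all $x,y\in\ki$, i.e.\ $C_1=C_2$. Conversely, if ${\rm ran}\,A$ is not dense, let $Q$ be the nonzero orthogonal projection onto $(\ov{{\rm ran}\,A})^\perp=\ker A^*$ and put $P=I_\ki-Q$. Starting from any solution $C$, the identity $PA=A$ shows that $PCP$ is again a solution with $0\le PCP\le P\le I_\ki$, and $PCP+Q$ is yet another solution: it satisfies $0\le PCP+Q\le P+Q=I_\ki$ and, since $A^*Q=0$, also $A^*(PCP+Q)A=B$, while differing from $PCP$ by the nonzero operator $Q$. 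Thus uniqueness fails.

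The main obstacle is the existence step, and within it the well-definedness and bounded extension of $D_0$; this is precisely the content of Douglas' factorization lemma, so the cleanest route may be to cite that result rather than re-derive the extension. Some care is still needed to land on the \emph{symmetric} factor $C=D^*D$ with the correct two-sided bounds $0\le C\le I_\ki$, as opposed to a mere one-sided factorization. Once the role of $\ker A^*$ is identified, the uniqueness equivalence reduces to a routine density-and-continuity argument in one direction and an explicit perturbation in the other.
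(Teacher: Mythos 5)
Your proof is correct and follows essentially the same route as the paper's: both define a contraction on $\mathrm{ran}\,A$ sending $A\f\mapsto B^{1/2}\f$, verify well-definedness from $\|B^{1/2}\f\|^2=\langle\f|B\f\rangle\le\|A\f\|^2$, extend by zero on $(\mathrm{ran}\,A)^\perp$, and take $C=D^*D$. You merely spell out the easy direction and the uniqueness equivalence (via the perturbation by the projection onto $\ker A^*$), which the paper compresses into ``the rest is trivial.''
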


\begin{proof}
For any $\psi\in\hi$ define
$
C(A\psi):=\sqrt{B}\psi.
$
If $A\psi=A\psi'$, i.e.\ $A\psi_{-}=0$, $\psi_-:=\psi-\psi'$, then
$$
0\le \|C(A\psi_-)\|^2=\|\sqrt{B}\psi_-\|^2=\<\psi_-|B\psi_-\>\le\<\psi_-|A^*A\psi_-\>=\|A\psi_-\|^2=0
$$
so that $C(A\psi)=C(A\psi')$ and $C$ is well defined. If $\eta\in({\rm ran}\,A)^\perp$ then define $C\eta:=0$. The rest is trivial.
\end{proof}

For the rest of this section, $(\Om,\Sigma)$, $(\Om',\Sigma')$, and $(\ov\Om,\ov\Sigma)$ are measurable spaces.

\begin{theorem}\label{lause_ykkonen}
Any discrete and extreme $\ms A\in\O(\Sigma,\hi)$ and any $\Mo\in\O(\ov\Sigma,\hi)$ such that ${\rm ran}\,\ms A\subseteq{\rm ran}\,\Mo$ are jointly measurable.
\end{theorem}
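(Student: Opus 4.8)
The plan is to realise joint measurability through the minimal Na\u{\i}mark dilation of the extreme observable $\ms A$, using Lemma \ref{lemma1} to manufacture the building blocks of a joint observable and the extremality criterion recalled before Lemma \ref{lemma1} to force those blocks to fit together with the correct $\Mo$-marginal. Write $\ms A=(\ms A_i)_{i=1}^N$ with $\ms A_i=\ms A(X_i)$ and $\sum_i\ms A_i=I$, and use ${\rm ran}\,\ms A\subseteq{\rm ran}\,\Mo$ to fix sets $Z_i\in\ov\Sigma$ with $\ms A_i=\Mo(Z_i)$. Let $(\ki,\ms P,J)$ be a minimal Na\u{\i}mark dilation of $\ms A$ and set $P_i:=\ms P(X_i)$ and $A_i:=P_iJ\colon\hi\to P_i\ki$; then $A_i^*A_i=J^*P_iJ=\ms A_i$, the $P_i$ are orthogonal projections with $\sum_iP_i=I_\ki$, and minimality makes ${\rm ran}\,A_i$ dense in $P_i\ki$.

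The first and conceptually decisive step is to reduce the construction of a joint observable to that of a single POVM on $\ki$ commuting with $\ms P$. On the one hand, any joint observable $\ms N$ of $\ms A$ and $\Mo$ obeys $\ms N(\{i\}\times W)\le\ms N(\{i\}\times\ov\Om)=\ms A_i=A_i^*A_i$, so Lemma \ref{lemma1} represents it uniquely as $\ms N(\{i\}\times W)=A_i^*\widetilde\Mo_i(W)A_i$ for effects $\widetilde\Mo_i(W)$ on $P_i\ki$, the uniqueness coming from the dense range of $A_i$. On the other hand, given POVMs $\widetilde\Mo_i$ on $P_i\ki$, equivalently a single POVM $\widetilde\Mo:=\bigoplus_i\widetilde\Mo_i$ on $\ki$ commuting with every $P_i$, the assignment $\ms N(\{i\}\times W):=J^*P_i\widetilde\Mo(W)J$ is positive (commuting positive operators), $\sigma$-additive, normalized, and has $\ms A$ as its automatic first marginal, while its second marginal is $J^*\big(\sum_iP_i\big)\widetilde\Mo(W)J=J^*\widetilde\Mo(W)J$. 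Thus the whole problem collapses to producing a $\ms P$-commuting POVM $\widetilde\Mo$ on $\ki$ with $J^*\widetilde\Mo(W)J=\Mo(W)$ for all $W$, i.e.\ a lift of $\Mo$ into the commutant of the dilation of $\ms A$.

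To build $\widetilde\Mo$ I would again use Lemma \ref{lemma1}: for each $i$ and $W$ one has $\Mo(W\cap Z_i)\le\Mo(Z_i)=\ms A_i=A_i^*A_i$, so there are unique effects $C_i(W)$ on $P_i\ki$ with $\Mo(W\cap Z_i)=A_i^*C_i(W)A_i$, and uniqueness transports normalization and $\sigma$-additivity from $W\mapsto\Mo(W\cap Z_i)$ to $C_i$, making each $C_i$ a POVM on $P_i\ki$. The subtle point is that the compression of $\bigoplus_iC_i$ equals $\sum_i\Mo(W\cap Z_i)$ rather than $\Mo(W)$: the two agree at $W=\ov\Om$ (both equal $I$) but differ by the multiplicity of the overlaps $Z_i\cap Z_j$ and, dually, by the $\Mo$-mass left outside $\bigcup_iZ_i$. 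Correcting the naive apportionment into one whose blocks sum to $\Mo(W)$ — for instance by passing to the atoms of the finite algebra generated by $\{Z_i\}$, in the spirit of the construction in Remark \ref{remarkinpoikanen}, and redistributing each atom fractionally among the relevant outcomes — is exactly where extremality must enter, through the criterion that the only $\ms P$-commuting $D$ with $J^*DJ=0$ is $D=0$.

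I expect this marginal-matching step to be the main obstacle. Lemma \ref{lemma1} delivers the blocks and their measure-theoretic regularity essentially for free, but it only ever controls objects inside the commutant of the dilation; converting the mere containment ${\rm ran}\,\ms A\subseteq{\rm ran}\,\Mo$ into the existence of a $\ms P$-commuting lift of all of $\Mo$ — equivalently, into the possibility of apportioning $\Mo$ among the outcomes of $\ms A$ so that the totals are the prescribed $\ms A_i$ — is precisely the content that extremality supplies and that the injectivity of $D\mapsto J^*DJ$ on the commutant is meant to enforce. That extremality cannot be omitted here is guaranteed by the fact that, for non-extreme $\ms A$, coexistence already fails to imply joint measurability, so any argument not using extremality exactly at this point would be proving something false.
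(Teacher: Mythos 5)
Your setup --- the minimal dilation, the blocks $C_i(W)$ on $\ms P(X_i)\ki$ obtained from Lemma \ref{lemma1} applied to $\Mo(W\cap Z_i)\le\ms A_i$, and the reduction to matching the $\Mo$-marginal --- coincides with the paper's, but you stop exactly at the step that carries the content of the theorem, and the repair you sketch (redistributing $\Mo$-mass over the atoms of the algebra generated by the $Z_i$) is not how it goes and is not needed. The missing idea is that extremality forces the overlaps to vanish outright: for $i\ne j$ the operator $\Mo(Z_i\cap Z_j)$ admits \emph{two} representations from Lemma \ref{lemma1}, namely $\Mo(Z_i\cap Z_j)=J^*C_i(Z_j)\ms P(X_i)J=J^*C_j(Z_i)\ms P(X_j)J$, so $D:=C_i(Z_j)\ms P(X_i)-C_j(Z_i)\ms P(X_j)$ commutes with $\ms P$ and satisfies $J^*DJ=0$; extremality gives $D=0$, and since the two terms are supported on the orthogonal subspaces $\ms P(X_i)\ki$ and $\ms P(X_j)\ki$, each vanishes separately, whence $\Mo(Z_i\cap Z_j)=0$.

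Once this is in hand there is nothing to correct: $\sum_i\Mo(Z_i)=\sum_i\ms A_i=I=\Mo(\ov\Om)$ together with the nullity of all pairwise intersections gives $\Mo\big(\ov\Om\setminus\bigcup_iZ_i\big)=0$, so $\sum_i\Mo(W\cap Z_i)=\Mo(W)$ for every $W$ and your ``naive'' blocks already have the correct second marginal; the joint observable is simply $\No(X_i\times W)=J^*C_i(W)J$. (Extremality is used once more, as you anticipate, to upgrade each $C_i$ from a finitely additive effect-valued set function to a weakly $\sigma$-additive POVM: the defect $C_i\big(\cup_nW_n\big)-\sum_nC_i(W_n)$ commutes with $\ms P$ and is annihilated by $J^*(\cdot)J$.) So the gap in your proposal is a single concrete lemma --- $\Mo(Z_i\cap Z_j)=0$ for $i\ne j$ --- and the double-representation trick above, not a fractional reapportionment, is what closes it.
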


\begin{proof}
Let $\ms A\in\O(\Sigma,\hi)$ and $\Mo\in\O(\ov\Sigma,\hi)$ and assume that there are $X\in\Sigma$ and $Z_X\in\ov\Sigma$ such that $\Mo(Z_X)=\ms A(X)$. Let us pick a minimal Na\u{\i}mark dilation $(\mc K,\ms P,J)$ for $\ms A$. For all $Z\in\ov\Sigma$,
$$
\Mo(Z\cap Z_X)\le\Mo(Z_X)=\ms A(X)=[\ms P(X)J]^*\ms P(X)J
$$
and lemma \ref{lemma1} implies that, there exists a $C_{X}(Z)\in\li(\mc K)$ such that $0\le C_{X}(Z)\le \ms P(X)$ and 
\begin{equation}\label{kaava_kakkonen}
\Mo(Z\cap Z_X)=[\ms P(X)J]^*C_{X}(Z)\ms P(X)J=J^*\ms P(X)C_{X}(Z)\ms P(X)J=J^*C_{X}(Z)\ms P(X)J.
\end{equation}

Assume now that $\ms A$ is discrete, that is, there exists a disjoint\footnote{The sequence $\{X_i\}_{i=1}^\infty\subseteq\Sigma$ can contain empty sets and it may happen that $\sfa_i=0$ for some $i$'s.} sequence $\{X_i\}_{i=1}^\infty\subseteq\Sigma$ such that one can identify $\ms A$ with the sequence $\{\ms A_i\}_{i=1}^\infty$ where $\ms A_i:=\ms A(X_i)$. Similarly, we denote $\ms P_i=\ms P(X_i)$. Moreover, assume that $\ms A$ is extreme and ${\rm ran}\,\ms A\subseteq{\rm ran}\,\Mo$. Now, for all $X\in\Sigma$ there is $Z_X\in\ov\Sigma$ such that $\ms M(Z_X)=\ms A(X)$ and, as above, we have the positive operators $C_X(Z)\leq\ms P(X)$ such that (\ref{kaava_kakkonen}) holds for all $X\in\Sigma$ and $Z\in\ov\Sigma$. From now on, denote $C_i(Z):=C_{X_i}(Z)$ and $Z_i=Z_{X_i}$ for any $i$ and $Z\in\ov\Sigma$; it follows that $0\leq C_i(Z)\leq\ms P_i$ for any $i$ and $Z\in\ov\Sigma$. For any $X_i\ne X_j$ such that $\ms A_i\ne0\ne\ms A_j$ one has from equation \eqref{kaava_kakkonen} %where $X\in\{X_i,X_j\}$, $Z_i:=Z_{X_i}$, $C_{X_i}(Z_i)=C_i$, $\Po_i:=\ms P(X_i)$ etc.)
$$
\Mo(Z_i\cap Z_j)=J^*C_i(Z_j)\Po_iJ=J^*C_j(Z_i)\Po_jJ.
$$
By defining $D:=C_i(Z_j)\Po_i-C_j(Z_i)\Po_j$ one gets
$$
J^*DJ=0,\qquad [D,\ms P(X)]=0\quad\forall X\in\Sigma,
$$
implying $D=0$ so that, since the operators $C_i(Z_j)\Po_i=C_i(Z_j)$ and $C_j(Z_i)\Po_j=C_j(Z_i)$ are supported on orthogonal subspaces, we have $C_i(Z_j)=0=C_j(Z_i)$ and 
$$
\Mo(Z_i\cap Z_j)=0.
$$

From \eqref{kaava_kakkonen} we get
$$
\Mo(Z\cap Z_i)=J^*C_i(Z)\Po_iJ=J^*C_i(Z)J
$$
where $C_i:\,\ov\Sigma\to\li(\Po_i\mc K)$ is a unique POVM. Indeed, assume that $(W_n)_{n=1}^\infty$ is a disjoint sequence in $\ov\Sigma$. Denoting $W^N=\bigcup_{n=1}^NW_n$ for any $N=1,\,2,\ldots$, one has the equality
$$
J^*C_i(W^N)J=\Mo(W^N\cap Z_i)=\sum_{n=1}^N\Mo(W_n\cap Z_i)=\sum_{n=1}^NJ^*C_i(W_n)J
$$
for any $i$. Since the operator $C_i(W^N)-\sum_{n=1}^NC_i(W_n)$ commutes with the spectral measure $\ms P$, the previous equality together with the extremality of
$\ms A$ imply that $\sum_{n=1}^NC_i(W_n)=C_i(W^N)\leq\ms P_i$. The sequence $\big(\sum_{n=1}^NC_i(W_n)\big)_{N=1}^\infty=\big(C_i(W^N)\big)_{N=1}^\infty$ is thus an increasing (since all the summands are positive) sequence bounded from above by $\ms P_i$ implying that $\sum_{n=1}^\infty C_i(W_n):=w-\lim_{N\to\infty}\sum_{n=1}^NC_i(W_n)=\sup_{N\in\N}C_i(W^N)$ is defined, and since the limit is weak, the operator $\sum_{n=1}^\infty C_i(W_n)$ is in the commutant of the range of $\ms P$. As above, one now sees $C_i\big(\cup_{n=1}^\infty W_n\big)=\sum_{n=1}^\infty C_i(W_n)$ since the operator $C_i\big(\cup_{n=1}^\infty W_n\big)-\sum_{n=1}^\infty C_i(W_n)$ commutes with the $\ms P_j$'s. Hence, for all $i$, $C_i:\ov\Sigma\to\li(\Po_i\mc K)$ is weakly $\sigma$-additive. Similarly, $C_i(\ov\Om)=\Po_i$. Note that $\mc K=\bigoplus_{i=1}^\infty\Po_i\mc K$ and $C_i(\ov\Omega)=\Po_i=I_{\Po_i\mc K}$. Now one can define a joint observable $\No:\,\Sigma\otimes\ov\Sigma\to\lh$ via
\begin{equation}\label{5.2.2014 12:32}
\No(X_i\times Z):=J^*C_i(Z)J,
\end{equation}
which proves the claim.
\end{proof}

\begin{corollary}\label{5.2.2014 13:05}
Any discrete and extreme $\ms A\in\O(\Sigma,\hi)$ and any $\ms B\in\O(\Sigma',\hi)$ are coexistent if and only if they are jointly measurable.
\end{corollary}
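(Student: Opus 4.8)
The forward implication (joint measurability $\Rightarrow$ coexistence) requires no extremality and is already recorded in Remark \ref{remarkinpoikanen}, so the whole content lies in the converse. The plan is therefore to assume that $\ms A$ and $\ms B$ are coexistent, fix a mother observable $\Mo\in\O(\ov\Sigma,\hi)$ with ${\rm ran}\,\ms A\cup{\rm ran}\,\ms B\subseteq{\rm ran}\,\Mo$, and then construct an explicit joint observable. Since in particular ${\rm ran}\,\ms A\subseteq{\rm ran}\,\Mo$, the pair $(\ms A,\Mo)$ satisfies the hypotheses of Theorem \ref{lause_ykkonen}, and I would reuse the machinery built in its proof rather than merely its conclusion.

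Concretely, I would fix a minimal Na\u{\i}mark dilation $(\mc K,\Po,J)$ of $\ms A$, write $\Po_i=\Po(X_i)$, and recall from the proof of Theorem \ref{lause_ykkonen} the operators $C_i(Z)$ with $0\le C_i(Z)\le\Po_i$, $[C_i(Z),\Po(X)]=0$ for all $X\in\Sigma$, and $\Mo(Z\cap Z_i)=J^*C_i(Z)J$, together with $C_i(\ov\Om)=\Po_i$ and $\Mo(Z_i\cap Z_j)=0$ for $i\ne j$. Putting $\widetilde{\Mo}(Z):=\sum_i C_i(Z)$ (a weakly convergent sum as in that proof) yields a POVM $\widetilde{\Mo}\in\O(\ov\Sigma,\mc K)$ which commutes with $\Po$, satisfies $\widetilde{\Mo}(\ov\Om)=\sum_i\Po_i=I_{\mc K}$, and lifts $\Mo$ in the sense that $J^*\widetilde{\Mo}(Z)J=\sum_i\Mo(Z\cap Z_i)=\Mo(Z)$, the last step using that the $Z_i$ partition $\ov\Om$ up to a $\Mo$-null set.

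The heart of the argument is to lift $\ms B$ into the same commutant. For each $Y\in\Sigma'$ I would choose, via ${\rm ran}\,\ms B\subseteq{\rm ran}\,\Mo$, a set $Z_Y\in\ov\Sigma$ with $\ms B(Y)=\Mo(Z_Y)$ and set $\widehat{\ms B}(Y):=\widetilde{\Mo}(Z_Y)$. The decisive tool is the extremality criterion for $\ms A$, that $J^*DJ=0$ together with $[D,\Po(X)]=0$ forces $D=0$: applied to the difference of two candidate lifts it shows $\widehat{\ms B}(Y)$ is independent of the choice of $Z_Y$ (because $J^*\widehat{\ms B}(Y)J=\Mo(Z_Y)=\ms B(Y)$ depends only on $Y$, while the difference commutes with $\Po$), and applied to $\widehat{\ms B}(Y_1\cup Y_2)-\widehat{\ms B}(Y_1)-\widehat{\ms B}(Y_2)$ it gives finite additivity. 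Countable additivity should then follow by noting that the positive, $\Po$-commuting tails $\widehat{\ms B}\big(\bigcup_{n>N}Y_n\big)$ decrease to some $D_\infty\ge0$ with $J^*D_\infty J=\lim_N\ms B\big(\bigcup_{n>N}Y_n\big)=0$, forcing $D_\infty=0$ by the same criterion. This would make $\widehat{\ms B}\in\O(\Sigma',\mc K)$ a genuine POVM commuting with $\Po$, with $\widehat{\ms B}(\Om')=I_{\mc K}$ and $J^*\widehat{\ms B}(Y)J=\ms B(Y)$. I expect this well-definedness-and-additivity step to be the main obstacle, and it is precisely where extremality is indispensable: a generic range inclusion gives a lift only on isolated sets, and only extremality rigidifies these choices into a bona fide measure.

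Finally, since the sharp $\Po$ and the POVM $\widehat{\ms B}$ commute, $\ms G(X\times Y):=\Po(X)\widehat{\ms B}(Y)$ is a positive bimeasure that extends to their joint observable on $\mc K$ as in Definition \ref{def:smearing}. Compressing by $J$, the map $\No(X\times Y):=J^*\Po(X)\widehat{\ms B}(Y)J$ is a POVM on $\Sigma\otimes\Sigma'$ whose marginals are $J^*\Po(X)\widehat{\ms B}(\Om')J=J^*\Po(X)J=\ms A(X)$ and $J^*\Po(\Om)\widehat{\ms B}(Y)J=J^*\widehat{\ms B}(Y)J=\ms B(Y)$. Thus $\No$ would be a joint observable of $\ms A$ and $\ms B$, completing the converse and hence the corollary.
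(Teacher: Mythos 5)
Your proof is correct and follows essentially the same route as the paper's: both reuse the operators $C_i(Z)$ constructed in the proof of Theorem \ref{lause_ykkonen} and invoke the extremality criterion for $\ms A$ on the dilation space to upgrade the set-wise lift $Y\mapsto C_i(Z_Y)$ of $\ms B$ into a genuine POVM, and your resulting joint observable $\No(X_i\times Y)=J^*\Po_i\widehat{\ms B}(Y)J=J^*C_i(Z_Y)J$ coincides with the paper's $(X_i,Y)\mapsto J^*C_i(W_Y)J$. The only cosmetic differences are that you assemble the blocks into a single $\Po$-commuting lift $\widehat{\ms B}$ on $\mc K$ and verify $\sigma$-additivity via decreasing tails, whereas the paper works blockwise on each $\Po_i\mc K$ with increasing partial sums.
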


\begin{proof}
We use the assumptions and notations of Theorem \ref{lause_ykkonen}. If also ${\rm ran}\,\ms B\subseteq{\rm ran}\,\Mo$, i.e.\ for each $Y\in\Sigma'$ there exists a
$W_Y\in\ov\Sigma$ such that $\Mo(W_Y)=\ms B(Y)$, then (from the proof of Theorem \ref{lause_ykkonen})
$$
\ms B(Y)=\sum_{i=1}^\infty J^*C_i(W_Y)J.
$$
As in the proof of Theorem \ref{lause_ykkonen}, the extremality of $\ms A$ guarantees that $\Sigma'\ni Y\mapsto C_i(W_Y)\in\mc L(\ms P_i\mc K)$ is a POVM. Namely, let $(Y_n)_{n=1}^\infty\subset\Sigma'$ be a disjoint sequence and denote $Y^N:=\bigcup_{n=1}^NY_n$ for any $N=1,\,2,\ldots$. One has
$$
\sum_iJ^*C_i(W_{Y^N})J=\ms B(Y^N)=\sum_{n=1}^N\ms B(Y_n)=\sum_i\sum_{n=1}^NJ^*C_i(W_{Y_n})J
$$
yielding $\sum_iC_i(W_{Y^N})=\sum_i\sum_{n=1}^NC_i(W_{Y_n})$ for all $N$ using the extremality of $\ms A$. Since the images of $C_i$ are supported on mutually orthogonal subspaces, it follows that $C_i(W_{Y^N})=\sum_{n=1}^NC_i(W_{Y_n})$ for all $i$ and $N$, and, again, the sequence of finite partial sums is increasing and bounded from above, and we may continue as in the proof of Theorem \ref{lause_ykkonen} to show that $Y\mapsto C_i(W_Y)$ is $\sigma$-additive. Other properties follow easily. Thus we can define a joint observable of $\ms A$ and $\ms B$ by extending the map
$$
(X_i,Y)\mapsto J^*C_i(W_Y)J
$$
to $\Sigma\otimes\Sigma'$.
\end{proof}

\begin{example}
Consider the case $\hi=\C^3$. Let 
$$
\ms A=(|1\rangle\langle1|,|2\rangle\langle2|,|3\rangle\langle3|)\qquad \ms B=(|\vp^+\rangle\langle\vp^+|,|\vp^-\rangle\langle\vp^-|,|3\rangle\langle3|),
$$
where $\vp^\pm=\frac{1}{\sqrt2}(|1\rangle\pm|2\rangle)$. These observables do not commute and, since they are projection valued, they are not jointly measurable. Despite this fact a 2-valued relabeling of $\ms A$ defined by $\ms A_{\rm rel}=(|1\rangle\langle1|+|2\rangle\langle2|,|3\rangle\langle3|)$ is jointly measurable with $\ms B$ (by Theorem \ref{lause_ykkonen}). This shows that in Corollary \ref{5.2.2014 13:05} it is not enough to have an observable which can be relabeled to an extreme discrete observable which is coexistent with $\ms B$ (compare to \cite{Pell2}).
\end{example}

\section{An extreme mother observable}\label{sec_mot}

Let $(\Om,\Sigma)$ and $(\ov\Om,\ov\Sigma)$ be measurable spaces, $\ms A\in\O(\Sigma,\hil)$, and let $\ms M\in\O(\ov\Sigma,\hil)$ with a minimal dilation $(\ki,\Po,J)$. Assume that $\Mo$ is extreme and $\mr{ran}\,\ms A\subseteq\mr{ran}\,\ms M$. Hence, for any $X\in\Sigma$ there exists a (unique modulo null sets) $Z_X\in\ov\Sigma$ such that $\ms A(X)=\ms M(Z_X)=J^*\ms P(Z_X)J$. Let us denote the (unique) map $X\mapsto\ms P(Z_X)$ by $\ms Q$. It is easy to show that $\ms Q:\,\Sigma\to\li(\ki)$ is a sharp POVM. For example, for any disjoint sequence $(X_i)_{i=1}^\infty$ one has
$$
J^*\left[\ms Q\big(\cup_i X_i\big)-\sum_i\ms Q(X_i)\right]J=\Ao\big(\cup_i X_i\big)-\sum_i\Ao(X_i)=0
$$
implying $\ms Q\big(\cup_i X_i\big)-\sum_i\ms Q(X_i)=0$ since $\ms M$ is extreme; the fact that $\sum_i\ms Q(X_i)$ exists as a weak limit of finite partial sums $\sum_{j=1}^N\ms Q(X_j)=\ms Q\big(\cup_{j=1}^NX_j\big)$ is proven in the same way as in the proof of Theorem \ref{lause_ykkonen}. Now ${\rm ran}\,\ms Q\subseteq{\rm ran}\,\ms P$ and the following theorem is a direct consequence of \cite[Theorem 3.5]{JePu07}.\footnote{To be exact, in order to directly apply \cite{JePu07} in this context, we would have to assume that $\mc K$ is separable. However, a careful reader finds out that it suffices that there be a $\sigma$-finite measure $\mu:\ov\Sigma\to[0,\infty]$ such that $\mu(Z)=0$ if and only if $\ms P(Z)=0$ and the measure $\mu=p_\rho^\Mo$, where $\rho\in\mc S(\hil)$ is a faithful state, has this property. Hence, we do not have to assume that the dilation space $\mc K$ is separable.}

\begin{theorem}\label{th2}
Let $(\Om,\Sigma)$ be standard Borel, $(\ov\Om,\ov\Sigma)$ a measurable space, $\ms A\in\O(\Sigma,\hil)$, and let $\ms M\in\O(\ov\Sigma,\hil)$ be extreme. Then
$\mr{ran}\,\ms A\subseteq\mr{ran}\,\ms M$ if and only if $\ms A$ is a relabeling of $\ms M$.
\end{theorem}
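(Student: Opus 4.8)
The \emph{if} direction is immediate from the definitions: if $\ms A$ is a relabeling of $\ms M$ via a measurable $f:\ov\Om\to\Om$, then $\ms A(X)=\ms M(f^{-1}(X))$ with $f^{-1}(X)\in\ov\Sigma$, so $\mr{ran}\,\ms A\subseteq\mr{ran}\,\ms M$. All the content lies in the converse, and my plan is to reduce it to the case of two \emph{sharp} observables and then realize the relabeling by a point map. Assuming $\mr{ran}\,\ms A\subseteq\mr{ran}\,\ms M$, I use the construction preceding the theorem: for each $X\in\Sigma$ pick $Z_X\in\ov\Sigma$ with $\ms A(X)=\ms M(Z_X)=J^*\ms P(Z_X)J$ and set $\ms Q(X):=\ms P(Z_X)$. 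Extremality of $\ms M$ makes this well defined, since if $\ms M(Z)=\ms M(Z')$ then $D:=\ms P(Z)-\ms P(Z')$ commutes with every $\ms P(\cdot)$ and satisfies $J^*DJ=0$, whence $D=0$. As already shown, $\ms Q:\Sigma\to\li(\ki)$ is then a sharp observable with $\mr{ran}\,\ms Q\subseteq\mr{ran}\,\ms P$ and $\ms A(X)=J^*\ms Q(X)J$. It therefore suffices to produce a measurable $f:\ov\Om\to\Om$ with $\ms Q(X)=\ms P(f^{-1}(X))$, since then $\ms A(X)=J^*\ms P(f^{-1}(X))J=\ms M(f^{-1}(X))$.

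For the sharp case I would view the assignment $\theta:X\mapsto Z_X$, taken modulo the ideal $\mathcal N:=\{Z\in\ov\Sigma:\ms P(Z)=0\}$, as a Boolean $\sigma$-homomorphism from $\Sigma$ into $\ov\Sigma/\mathcal N$: preservation of $\emptyset$, $\Om$, complements, and countable disjoint unions each follows from the matching POVM identity for $\ms Q$ combined with the injectivity just established, an identity among the projections $\ms Q(X)$ forcing the same identity among the classes $[Z_X]$. To turn $\theta$ into a point map I exploit that $(\Om,\Sigma)$ is standard Borel: fix a countable algebra $\{X_n\}_{n=1}^\infty$ generating $\Sigma$ and separating the points of $\Om$, and write $Z_n:=Z_{X_n}$. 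I would then define $f:\ov\Om\to\Om$ by letting, for $\ms P$-almost every $\omega$, $f(\omega)$ be the unique point of $\Om$ that lies in exactly those $X_n$ for which $\omega\in Z_n$. The existence and uniqueness of this point almost everywhere, the measurability of $f$, and the identities $f^{-1}(X)=Z_X\ (\mr{mod}\ \mathcal N)$ are precisely the conclusion of the Sikorski/Loomis--Sikorski-type representation theorem for $\sigma$-homomorphisms out of a standard Borel space, i.e.\ of \cite[Theorem 3.5]{JePu07}. Discarding the exceptional $\ms P$-null set (harmless by absolute continuity) and redefining $f$ there arbitrarily gives a genuine measurable relabeling, and $\sigma$-additivity propagates the equalities from the generators $\{X_n\}$ to all of $\Sigma$.

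The main obstacle is exactly this realization step: passing from the abstract Boolean $\sigma$-homomorphism $X\mapsto[Z_X]$ to an honest measurable function $f$ with $f^{-1}(X)=Z_X$ almost everywhere. This is where the standard Borel hypothesis on $(\Om,\Sigma)$ is indispensable, since countable generation and point separation are what allow $f(\omega)$ to be pinned down as a single point; over a merely abstract measurable space the $\sigma$-homomorphism need not be induced by any point map. One must also be careful, as the footnote indicates, to run the entire null-set bookkeeping against a faithful $\sigma$-finite measure equivalent to $\ms P$ rather than assuming $\ki$ separable outright, so that ``$\mr{mod}\ \mathcal N$'' is controlled. Granting this representation theorem, the rest is the routine verification recorded above.
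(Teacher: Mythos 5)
Your proposal is correct and follows essentially the same route as the paper: use extremality of $\ms M$ to build the well-defined sharp observable $\ms Q(X)=\ms P(Z_X)$ on the dilation space with $\mr{ran}\,\ms Q\subseteq\mr{ran}\,\ms P$, invoke \cite[Theorem 3.5]{JePu07} (with the null-set bookkeeping run against a faithful measure, as in the paper's footnote) to realize $\ms Q$ as a relabeling of $\ms P$ via a point map $f$, and pull back through $J^*\cdot J$ to get $\ms A(X)=\ms M(f^{-1}(X))$. The additional detail you give on the Loomis--Sikorski-type realization is just a correct unpacking of the cited theorem, which the paper quotes without proof.
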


In the next corollary we show that if coexistent observables have an extreme mother observable then they are jointly measurable.

\begin{corollary}
Let $(\Om,\Sigma)$ and $(\Om',\Sigma')$ be standard Borel spaces, $(\ov\Om,\ov\Sigma)$ a measurable space, $\ms A\in\O(\Sigma,\hil)$ and $\ms B\in\O(\Sigma',\hil)$. Let
$\ms M\in\O(\ov\Sigma,\hil)$ be extreme.  If $\mr{ran}\,\ms A\cup\mr{ran}\,\ms B\subseteq\mr{ran}\,\ms M$  then $\ms A$ and $\ms B$ are jointly measurable.
\end{corollary}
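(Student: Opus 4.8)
The plan is to reduce everything to Theorem \ref{th2} and then assemble a single joint observable as a relabeling of $\ms M$. First I would apply Theorem \ref{th2} to $\ms A$: since $(\Om,\Sigma)$ is standard Borel, $\ms M$ is extreme, and $\mr{ran}\,\ms A\subseteq\mr{ran}\,\ms M$, the theorem yields a $(\ov\Sigma,\Sigma)$-measurable function $f:\ov\Om\to\Om$ with $\ms A(X)=\ms M(f^{-1}(X))$ for all $X\in\Sigma$. Applying Theorem \ref{th2} a second time to $\ms B$ (using that $(\Om',\Sigma')$ is standard Borel and $\mr{ran}\,\ms B\subseteq\mr{ran}\,\ms M$) produces a $(\ov\Sigma,\Sigma')$-measurable $g:\ov\Om\to\Om'$ with $\ms B(Y)=\ms M(g^{-1}(Y))$ for all $Y\in\Sigma'$. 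Thus both observables are relabelings of the common extreme mother $\ms M$.

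Next I would form the joint map $h:=(f,g):\ov\Om\to\Om\times\Om'$, $h(z)=(f(z),g(z))$. Because $f$ and $g$ are measurable into $\Sigma$ and $\Sigma'$, the map $h$ is measurable with respect to the product $\sigma$-algebra: on a measurable rectangle $h^{-1}(X\times Y)=f^{-1}(X)\cap g^{-1}(Y)\in\ov\Sigma$, and the rectangles generate $\Sigma\otimes\Sigma'$. I would then define
$$
\ms N(W):=\ms M(h^{-1}(W)),\qquad W\in\Sigma\otimes\Sigma',
$$
that is, $\ms N$ is the relabeling of $\ms M$ by $h$; as a pullback of a POVM along a measurable map it automatically lies in $\O(\Sigma\otimes\Sigma',\hil)$.

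Finally I would verify the marginals. For $X\in\Sigma$, using $g^{-1}(\Om')=\ov\Om$ one gets $\ms N(X\times\Om')=\ms M\big(f^{-1}(X)\cap g^{-1}(\Om')\big)=\ms M(f^{-1}(X))=\ms A(X)$, and symmetrically $\ms N(\Om\times Y)=\ms B(Y)$. Hence $\ms N$ is a joint observable of $\ms A$ and $\ms B$, so they are jointly measurable.

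There is essentially no hard obstacle once Theorem \ref{th2} is in hand, since the substantive work has been front-loaded into that theorem; the only point that deserves an explicit line is the measurability of $h$ into the product $\sigma$-algebra, which is routine. Alternatively one could avoid $h$ altogether and invoke the bimeasure-extension fact recorded after Definition \ref{def:smearing}: writing the two relabelings as smearings with kernels $\beta(z,X)=\chi_{f^{-1}(X)}(z)$ and $\gamma(z,Y)=\chi_{g^{-1}(Y)}(z)$, the bimeasure $(X,Y)\mapsto\int_{\ov\Om}\beta(z,X)\gamma(z,Y)\,\d\ms M(z)=\ms M(f^{-1}(X)\cap g^{-1}(Y))$ extends to a joint observable precisely because $(\Om,\Sigma)$ and $(\Om',\Sigma')$ are standard Borel, reproducing the same $\ms N$.
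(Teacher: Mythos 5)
Your proposal is correct and follows essentially the same route as the paper: apply Theorem \ref{th2} twice to realize $\ms A$ and $\ms B$ as relabelings of the extreme mother $\ms M$, and then combine them into a joint observable. The paper simply cites the bimeasure-extension fact from Definition \ref{def:smearing} at this last step (your stated alternative), whereas your explicit construction $\ms N(W)=\ms M(h^{-1}(W))$ with $h=(f,g)$ is a slightly more self-contained way to finish that avoids the extension theorem in this special case.
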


\begin{proof}
It follows from Theorem \ref{th2} that there exists measurable functions $f:\,\ov\Om\to\Om$ and $g:\,\ov\Om\to\Om'$ so that $\Ao(X)=\Mo\big(f^{-1}(X)\big)$ and
$\ms B(Y)=\Mo\big(g^{-1}(Y)\big)$ for all $X\in\Sigma$ and $Y\in\Sigma'$, so that, especially, $\ms A$ and $\ms B$ are smearings of $\ms M$. According to the second paragraph of Definition \ref{def:smearing}, this means that $\ms A$ and $\ms B$ are jointly measurable.
\end{proof}

Let $\ms A\in\O(\Sigma,\hil)$ and $\ms M\in\O(\ov\Sigma,\hil)$. Assume that $\nu:\Sigma\to[0,\infty]$ and $\mu:\ov\Sigma\to[0,\infty]$ are $\sigma$-finite measures such that $\mu$ and $\ms M$ (resp.\ $\nu$ and $\ms A$) are mutually absolutely continuous. Indeed, for simplicity, let us choose the measures $\mu=p_\rho^\Mo$ and $\nu=p_\rho^\Ao$ where $\rho\in\mc S(\hil)$ is a faithful state. We keep this state fixed for the rest of this section.

Suppose that $\ms A$ be a post-processing of $\ms M$ given by a  Markov kernel $\beta:\ov\Om\times\Sigma\to[0,1]$ (with respect to $\mu$). Denote by $L^\infty(\nu)$ the von Neumann algebra of (equivalence classes of) $\nu$-essentially bounded $\nu$-measurable complex functions on $\Omega$ and by $L^\infty(\nu)^+$ the positive elements of $L^\infty(\nu)$. Now $\beta$ induces a positive linear map $\beta^*:\,L^\infty(\nu)\to L^\infty(\mu)$ via $[\beta^*(f)](z)=\int_\Om f(x)\beta(z,\d x)$, such that
$$
\ms A(f):=\int_\Om f\,\d\ms A=\ms M\big(\beta^*(f)\big),\qquad f\in L^\infty(\nu)
$$
where $\ms A:\,L^\infty(\nu)\to\mc L(\hil),\;f\mapsto\ms A(f)$ is a positive (normal and unital) linear map. Hence, $\ms A\big(L^\infty(\nu)^+\big)\subseteq\ms M\big(L^\infty(\mu)^+\big)$. The next theorem shows that when the value space of $\ms A$ is standard Borel  the converse holds if $\ms M$ is extreme.

\begin{theorem}\label{prop:extpositive}
Let $(\Om,\Sigma)$ be standard Borel and $(\ov\Om,\ov\Sigma)$ be a measurable space. Let $\ms A\in\O(\Sigma,\hil)$ and $\ms M\in\O(\ov\Sigma,\hil)$ and define $\mu=p_\rho^\Mo$ and $\nu=p_\rho^\Ao$. If $\ms A\big(L^\infty(\nu)^+\big)\subseteq\ms M\big(L^\infty(\mu)^+\big)$ and $\ms M$ is extreme, then $\ms A$ is a post-processing
of $\ms M$.
\end{theorem}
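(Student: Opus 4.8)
The plan is to convert the cone-inclusion hypothesis $\ms A\big(L^\infty(\nu)^+\big)\subseteq\ms M\big(L^\infty(\mu)^+\big)$ into a single positive, unital, normal linear map $\beta^*:L^\infty(\nu)\to L^\infty(\mu)$, and then to read off the desired weak Markov kernel from $\beta^*$ via the duality between such maps and kernels, exactly as in \cite{JePuVi}. Extremality of $\ms M$ is what makes $\beta^*$ well defined, while the standard Borel assumption on $(\Om,\Sigma)$ is what turns the abstract map into an honest kernel.

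First I would establish that the integration map $g\mapsto\ms M(g)$ is injective on $L^\infty(\mu)$. Fix a minimal Na\u{\i}mark dilation $(\ki,\Po,J)$ of $\ms M$, so that $\ms M(g)=J^*\Po(g)J$ with $\Po(g)=\int_{\ov\Om}g\,\d\Po$. If $g\in L^\infty(\mu)$ is real and $\ms M(g)=0$, then $\Po(g)$ lies in the von Neumann algebra generated by $\mr{ran}\,\Po$ and hence commutes with every $\Po(Z)$, while $J^*\Po(g)J=0$; extremality of $\ms M$, through its dilation characterization (apply the criterion to $D=\Po(g)$), forces $\Po(g)=0$. Because $\mu=p_\rho^\Mo$ with $\rho$ faithful has the property that $\mu(Z)=0$ iff $\Po(Z)=0$, this gives $g=0$ $\mu$-almost everywhere; splitting into real and imaginary parts extends injectivity to all of $L^\infty(\mu)$.

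With injectivity in hand I would define $\beta^*$. For $f\in L^\infty(\nu)^+$ the hypothesis supplies some $g\in L^\infty(\mu)^+$ with $\ms M(g)=\ms A(f)$, and by injectivity this $g$ is unique; set $\beta^*(f):=g$ and extend by linearity using the decomposition of an element of $L^\infty(\nu)$ into positive parts. Uniqueness makes $\beta^*$ well defined and linear, positivity is built in, and $\beta^*(1)=1$ follows from $\ms M(1)=\id=\ms A(1)$ together with injectivity, so $\beta^*$ is unital. For normality, if $f_n\uparrow f$ in $L^\infty(\nu)^+$ then the $g_n:=\beta^*(f_n)$ are increasing and bounded above by $\beta^*(f)$, hence converge $\mu$-almost everywhere to some $g^*$; normality of the integration maps gives $\ms M(g^*)=\lim_n\ms M(g_n)=\lim_n\ms A(f_n)=\ms A(f)=\ms M\big(\beta^*(f)\big)$, and injectivity yields $g^*=\beta^*(f)$, so $\beta^*$ preserves increasing limits and is normal.

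Finally, since $(\Om,\Sigma)$ is standard Borel, I would invoke the correspondence between positive unital normal maps $L^\infty(\nu)\to L^\infty(\mu)$ and Markov kernels \cite[Theorem 6.3 and the subsequent discussion]{JePuVi} to produce a Markov kernel $\beta:\ov\Om\times\Sigma\to[0,1]$ with $[\beta^*(f)](z)=\int_\Om f(x)\,\beta(z,\d x)$. Taking $f=\chi_X$ then gives $\ms A(X)=\ms M\big(\beta(\cdot,X)\big)=\int_{\ov\Om}\beta(z,X)\,\d\ms M(z)$, while absolute continuity of $\ms M$ with respect to $\mu=p_\rho^\Mo$ holds automatically because $\rho$ is faithful; together these exhibit $\ms A$ as a post-processing of $\ms M$. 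I expect the main obstacle to be this last passage from the abstract normal map $\beta^*$ to a genuine kernel that is simultaneously measurable in $z$ and $\sigma$-additive in $X$: this is precisely the measure-theoretic disintegration where the standard Borel hypothesis is indispensable, and it is the reason one cannot simply choose a representative $g_X$ for each $X$ separately and hope the pointwise axioms survive.
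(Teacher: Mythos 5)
Your proof is correct and follows essentially the same route as the paper's: extremality of $\ms M$ (applied through the Na\u{\i}mark dilation criterion) forces the correspondence $f\mapsto g_f$ to be a well-defined positive unital normal linear map $L^\infty(\nu)\to L^\infty(\mu)$, and the standard Borel hypothesis then converts this map into a Markov kernel. Your packaging of the repeated extremality arguments into a single injectivity lemma for $g\mapsto\ms M(g)$ is a clean streamlining of what the paper does step by step with the operators $\ms P(g_f)$, but it is the same idea; the only cosmetic differences are your use of increasing sequences rather than nets (harmless since $\mu$ is $\sigma$-finite) and the citation for the kernel duality, for which the paper uses \cite[Theorem 2.1]{JePu}.
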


\begin{proof}
Suppose that $(\mc K,\ms P,J)$ is a minimal Na\u{\i}mark dilation for $\ms M$. According to the claim, for any $f\in L^\infty(\nu)^+$ there is a (unique) $g_f\in L^\infty(\mu)^+$ such that $\ms A(f)=\ms M(g_f)$. Let $f_1,\,f_2\in L^\infty(\mu)^+$ and $\lambda\geq0$. The condition $\ms A(f_1+\lambda f_2)=\ms A(f_1)+\lambda\ms A(f_2)$ yields
$$
J^*\big(\ms P(g_{f_1})+\lambda\ms P(g_{f_2})-\ms P(g_{f_1+\lambda f_2})\big)J=0.
$$
The extremality of $\ms M$  implies that $\ms P(g_{f_1+\lambda f_2})=\ms P(g_{f_1})+\lambda\ms P(g_{f_2})$. In the same way, we easily see that the map $f\mapsto g_f$ is
unital, i.e.,\ the constant function $1$ on $\Om$ is mapped to the constant function $1$ on $\ov\Om$. Hence, we may extend the function $f\mapsto\ms P(g_f)$ to a linear map defined on the whole of $L^\infty(\nu)$. Let us denote this map by $\ms Q$. Clearly, $\ms Q$ is positive. Assume that $(f_\alpha)_{\alpha\in A}$ is an increasing net of real-valued elements of $L^\infty(\mu)$ that is bounded from above and denote its supremum by $f$. The net $\big(\ms Q(f_\alpha)\big)_{\alpha\in A}$ is also an increasing net bounded from above by $\ms Q(f)$. The equalities $J^*\sup_\alpha\ms Q(f_\alpha)J=\sup_\alpha\ms A(f_\alpha)=\ms A(f)=J^*\ms Q(f)J$ yield $J^*\big(\sup_\alpha\ms Q(f_\alpha)-\ms Q(f)\big)J=0$ and extremality implies that $\sup_\alpha\ms Q(f_\alpha)=\ms Q(f)$. Hence, $\ms Q$ is normal.

Since $\mu$ and $\ms M$ are mutually absolutely continuous, the spectral measure $\ms P$ is a normal *-isomorphism of $L^\infty(\mu)$ onto its image (with a normal inverse). Hence, the map $L^\infty(\nu)\ni f\mapsto g_f\in L^\infty(\mu)$ is also a normal positive unital linear map. Such a map is a transpose of a norm-continuous predual map $L^1(\mu)\to L^1(\nu)$ that maps positive elements of $L^1(\mu)$ with $L^1$-norm 1 to positive elements of $L^1(\nu)$ with $L^1$-norm 1. In our context, such a map is described by a Markov kernel $\beta:\ov\Om\times\Sigma\to[0,1]$ \cite[Theorem 2.1]{JePu} and, hence, $g_f=\beta^*(f)$ for all $f\in L^\infty(\nu)$.
\end{proof}

If in the claim of Theorem \ref{prop:extpositive} $(\ov\Om,\ov\Sigma)$ is also standard Borel, then $\ms A$ and $\ms M$ are jointly measurable. This fact is contained in the following corollary. In the claim of Theorem \ref{prop:extpositive}, we could have equally well assumed that $\mu:\ov\Sigma\to[0,\infty]$ (resp. $\nu:\Sigma\to[0,\infty]$) is a $\sigma$-finite measure such that $\mu$ and $\ms M$ (resp. $\nu$ and $\ms A$) are mutually absolutely continuous. A similar generalization could be done in the following corollary.

\begin{corollary}\label{cor:extpositivejoint}
Suppose that $(\Om,\Sigma)$ and $(\Om',\Sigma')$ are standard Borel spaces, $(\ov\Om,\ov\Sigma)$ is a measurable space, $\ms A\in\O(\Sigma,\hil)$, $\ms B\in\O(\Sigma',\hil)$, and $\ms M\in\O(\ov\Sigma,\hil)$. Define $\mu=p_\rho^\Mo$, $\nu=p_\rho^{\ms A}$ and $\nu'=p_\rho^{\ms B}$. If $\ms A\big(L^\infty(\nu)^+\big)\cup\ms B\big(L^\infty(\nu')^+\big)\subseteq\ms M\big(L^\infty(\mu)^+\big)$ and $\ms M$ is extreme then $\ms A$ and $\ms B$ are jointly measurable.
\end{corollary}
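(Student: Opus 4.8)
The plan is to deduce the claim from Theorem \ref{prop:extpositive} by applying it once to $\ms A$ and once to $\ms B$, and then to combine the two resulting post-processings by means of the joint-measurability construction recorded in the second paragraph of Definition \ref{def:smearing}. The hypotheses have been arranged precisely so that both applications are legitimate: the assumed containment $\ms A\big(L^\infty(\nu)^+\big)\cup\ms B\big(L^\infty(\nu')^+\big)\subseteq\ms M\big(L^\infty(\mu)^+\big)$ splits into the two separate containments $\ms A\big(L^\infty(\nu)^+\big)\subseteq\ms M\big(L^\infty(\mu)^+\big)$ and $\ms B\big(L^\infty(\nu')^+\big)\subseteq\ms M\big(L^\infty(\mu)^+\big)$, while $(\Om,\Sigma)$ and $(\Om',\Sigma')$ are both standard Borel and $\ms M$ is extreme.

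First I would apply Theorem \ref{prop:extpositive} to the pair $(\ms A,\ms M)$, with $\mu=p_\rho^\Mo$ and $\nu=p_\rho^{\ms A}$, to conclude that $\ms A$ is a post-processing of $\ms M$; that is, there is a Markov kernel $\beta:\ov\Om\times\Sigma\to[0,1]$ (with respect to $\mu$) such that $\ms A(X)=\int_{\ov\Om}\beta(z,X)\,\d\ms M(z)$ for all $X\in\Sigma$. Repeating the argument verbatim for the pair $(\ms B,\ms M)$, now with $\nu'=p_\rho^{\ms B}$, yields a second Markov kernel $\gamma:\ov\Om\times\Sigma'\to[0,1]$ with $\ms B(Y)=\int_{\ov\Om}\gamma(z,Y)\,\d\ms M(z)$ for all $Y\in\Sigma'$.

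Having realized both $\ms A$ and $\ms B$ as post-processings of the single mother observable $\ms M$ via the Markov kernels $\beta$ and $\gamma$, I would invoke the second paragraph of Definition \ref{def:smearing}: since $(\Om,\Sigma)$ and $(\Om',\Sigma')$ are standard Borel, the assignment
$$
\Sigma\times\Sigma'\ni(X,Y)\mapsto\int_{\ov\Om}\beta(z,X)\gamma(z,Y)\,\d\ms M(z)
$$
is a positive-operator bimeasure that extends to a joint observable $\ms N\in\O(\Sigma\otimes\Sigma',\hil)$ having $\ms A$ and $\ms B$ as its marginals. This exhibits $\ms A$ and $\ms B$ as jointly measurable and finishes the proof.

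I do not expect this corollary to present an obstacle of its own: all of the genuine difficulty — namely, extracting a bona fide Markov kernel from the mere containment of positive-function images, which rests on the extremality of $\ms M$, the normality of the induced maps, and the transpose/predual description of normal positive unital maps — has already been absorbed into Theorem \ref{prop:extpositive}. The one point worth keeping in mind is that both applications of that theorem must be made against a common faithful state $\rho$, hence a common base measure $\mu=p_\rho^\Mo$ on the mother side; this is exactly how $\mu$, $\nu$, and $\nu'$ are set up in the statement, and it guarantees that $\beta$ and $\gamma$ are kernels with respect to the same $\mu$, so that the bimeasure displayed above is well defined.
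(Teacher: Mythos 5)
Your proof is correct and follows essentially the same route as the paper: apply Theorem \ref{prop:extpositive} separately to $\ms A$ and $\ms B$ to obtain Markov kernels $\beta$ and $\gamma$ with respect to the common measure $\mu=p_\rho^\Mo$, then invoke the bimeasure construction from the second paragraph of Definition \ref{def:smearing} to produce the joint observable. Your closing remark about using a common faithful state is a sensible point of care that the paper leaves implicit.
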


\begin{proof}
According to Theorem \ref{prop:extpositive}, there is a Markov kernel $\beta:\ov\Om\times\Sigma\to[0,1]$ (resp.\ $\gamma:\ov\Om\times\Sigma'\to[0,1]$) such that $\ms A$
(resp.\ $\ms B$) is a smearing of $\ms M$ by means of $\beta$ (resp.\ $\gamma$). According to the second paragraph of Definition \ref{def:smearing}, $\ms A$ and $\ms B$ are jointly measurable.
\end{proof}

\section{Conclusions}

In the case where the range of an observable $\ms A$ is contained in the range of another observable $\ms M$, there is no guarantee that one can recover the measurement statistics of $\ms A$ from that of $\ms M$. Formally this means that although $\ms A(X)=\ms M(Z_X)$ for some set function $X\mapsto Z_X$ one cannot determine the form of the function. In this work we have shown that if either  $\ms A$ or $\ms M$ is extreme there is a way to connect their statistics. Moreover, we have demonstrated that two coexistent observables $\ms A$ and $\ms B$ are jointly measurable if one of them is extreme and discrete or their mother observable is extreme. We leave it as an open question if the assumption on discreteness can be dropped.

As in \cite{2_coex} we have introduced a concept of joint measurability of binarizations which is a true generalization of coexistence. Namely, we have presented an example of two observables which are not coexistent but their binarizations are jointly measurable. One might ask if similar results hold for coexistence and joint measurability of binarizations in the case of extreme observables as those which we have presented for joint measurability and coexistence. E.g.,\ if we have two observables whose binarizations are jointly measurable,
does it follow that they are coexistent if one of them is extreme? This is left as an open question.
\\

\noindent {\bf Acknowledgments.} The authors thank  Dr.\ Teiko Heinosaari for useful discussions and comments on the manuscript. EH would like to thank Alfred Kordelin Foundation for financial support. RU acknowledges support from Finnish Cultural Foundation. 

%%VIITTEET

\end{document}